\newtheorem{assumption}{Assumption}
\newtheorem{definition}{Definition}
\newtheorem{problem}{Problem}
\definecolor{mycolor}{HTML}{0072B2} 
\newcommand{\colorsquare}[1]{\textcolor{#1}{\rule{1ex}{1ex}}}
\definecolor{mycolor1}{HTML}{D55E00}
\definecolor{mycolor2}{HTML}{A50026}
\newcommand{\cmark}{\ding{51}} 
\newcommand{\xmark}{\ding{55}} 
\newtheorem{theorem}{Theorem}
\newtheorem{remark}{Remark}
\newif\ifshowSSR
\begin{document}
\begin{frontmatter}

\title{Approximation-Free Control Barrier Functions for Prescribed-Time Reach-Avoid of Unknown Systems\thanksref{footnoteinfo}}

\thanks[footnoteinfo]{The work is supported in part by the ARTPARK}

\author[First]{Shubham Sawarkar} 
\author[First]{Pushpak Jagtap}

\address[First]{Indian Institute of Science (IISc), Bangalore, India. \\(e-mail: \{shubhamsg,pushpak\}@iisc.ac.in)}

\begin{abstract}                
We study the prescribed-time reach-avoid (PT-RA) control problem for nonlinear systems with unknown dynamics operating in environments with moving obstacles. Unlike robust or learning-based Control Barrier Function (CBF) methods, the proposed framework requires neither online model learning nor uncertainty bound estimation. A CBF-based Quadratic Program (CBF-QP) is solved on a simple virtual system to generate a safe reference satisfying PT-RA conditions with respect to time-varying, tightened obstacle and goal sets. The true system is confined to a \emph{Virtual Confinement Zone} (VCZ) around this reference using an approximation-free feedback law. This construction guarantees real-time safety and prescribed-time target reachability under unknown dynamics and dynamic constraints without explicit model identification or offline precomputation. Simulation results illustrate reliable dynamic obstacle avoidance and timely convergence to the target set.
\end{abstract}

\begin{keyword}
Control barrier functions; Reach–avoid problems; Unknown dynamical systems.
\end{keyword}

\end{frontmatter}

\section{Introduction}
Safety assurance in dynamical systems is essential in applications such as autonomous vehicles, aerial robots, and industrial automation. Beyond achieving the control objective, it is necessary to ensure that the system trajectories remain within safe regions of the state space. These requirements are often expressed as state constraints, and most existing control strategies rely on some knowledge of the system dynamics to check or enforce them. In practice, obtaining an accurate model is difficult due to uncertainty, unmodeled effects, and environmental changes. Exhaustive testing is also not feasible, yet safety and timely goal satisfaction must still be ensured, often within a prescribed time window.

Several approaches have been developed for the safety-critical control of nonlinear systems. Hamilton–Jacobi (HJ) reachability~\citep{mitchell2005} provides rigorous guarantees through backward reachable sets, but it requires solving high-dimensional PDEs, making it model-dependent and computationally expensive. Predictive safety filters~\citep{waberisich2023} enforce Nagumo’s condition through constrained optimization, but their complexity grows rapidly with the prediction horizon and the number of known obstacle trajectories. \cite{saxena} proposed an RL-based approach to handle partially unknown dynamics, but it requires extensive offline training and offers no formal safety guarantees. Symbolic control~\citep{tabuada2009verification} requires large offline abstractions, making it impractical for high-dimensional or prescribed-time tasks. Achieving prescribed-time reach–avoid behavior remains challenging for all these approaches.

CBFs~\citep{ames2019theory,jagtap2020formal} have emerged as an efficient framework for real-time safety enforcement. However, classical CBF-based Quadratic Programs (CBF-QPs)~\citep{ames2019theory} assume fully known dynamics. To address uncertainty, several extensions have been proposed. Robust formulations explicitly compensate for known bounded disturbances or modeling errors~\citep{JANKOVIC2018359,buch2022,XU201554,kolathaya2019,alan2023}, while learning-based methods employ Gaussian Processes or neural networks to approximate the unknown dynamics from data~\citep{jagtap2020,seiler2022,COHEN2024100947}. {\cite{molnar2021} proposed a model-free safety-critical control framework where a safe velocity is synthesized via CBFs on a reduced-order kinematic model and tracked by a platform-specific controller. However, the model-free variant provides only input-to-state safety (ISSf) guarantees, and the framework does not address prescribed-time reach or moving obstacles.} Other works focus on improving estimation and observer design under uncertainty~\citep{takano2018} or adaptively adjusting safety margins. Moreover, dynamic obstacle avoidance has been addressed via collision-cone constraints~\citep{manan}.
 Despite these advances, most approaches still depend on either prior knowledge of uncertainty bounds or extensive offline training, limiting their practicality for systems with unstructured or time-varying uncertainties. Time-varying CBFs~\citep{8404080} extend safety constraints to logical or temporal tasks but remain applicable mainly to known system models and known disturbance bounds.

To deal with an unknown model, Prescribed-Performance Control~\citep{4639441,BERGER2018345,Berger2021} provides an approximation-free framework that ensures tracking errors remain in user-defined transient performance bounds. ~\citet{dasstt} and \cite{das2025} extend the approximation-free framework to address the reach-avoid problem by using Spatiotemporal Tubes (STT). However, these frameworks do not explicitly encode safety constraints or support dynamic obstacle avoidance.

To overcome these limitations, we propose the \emph{Virtual Confinement Zone (VCZ)} framework, which unifies approximation-free confinement and CBF-based safety. A CBF-QP generates a virtual trajectory ensuring reach-avoid satisfaction within a prescribed time using time-varying CBF constraints, while the true system is confined around this trajectory inside a shrinking, time-varying region. This construction guarantees VCZ forward invariance and prescribed-time reachability for nonlinear systems with unknown dynamics and moving obstacles, providing real-time safety and goal satisfaction without exact model knowledge, conservative uncertainty bounds, or any form of offline pre-computation. The proposed framework has been demonstrated using simulation.

\section{Problem Formulation}
\label{sec:pblm_formulation}
\textit{Notations:} 

The Euclidean norm of vector $x\in\mathbb{R}^n$ is denoted by \(\|x\|\). 
The closed and open balls centered at \(x_0\) with radius \(r\) are given by 
\(\mathcal{B}(x_0, r) := \{x \in \mathbb{R}^n \mid \|x - x_0\| \le r\}\) and 
\(\mathcal{B}^\circ(x_0, r) := \{x \in \mathbb{R}^n \mid \|x - x_0\| < r\}\), respectively. 
The intersection and union of a collection of \(d\) sets 
\(\{\mathcal{S}_i\}_{i=1}^d\) are denoted by 
\(\bigcap_{i=1}^d \mathcal{S}_i\) and 
\(\bigcup_{i=1}^d \mathcal{S}_i\), respectively.
For a vector-valued function \( h:\mathbb{R}^n \to \mathbb{R}^m \), 
the Jacobian with respect to \(x\) is denoted by 
\( \mathbb{J}_x h := \frac{\partial h(x)}{\partial x} \in \mathbb{R}^{m\times n} \).
The class \(\mathcal{K}\) consists of continuous, strictly increasing functions \(\alpha\) with \(\alpha(0) = 0\); 
\(\mathcal{K}_\infty\) denotes unbounded \(\mathcal{K}\) functions; 
and the extended \(\mathcal{K}_\infty\) class includes functions \(\alpha : \mathbb{R} \to \mathbb{R}_+\) that are continuous, strictly increasing, unbounded, and satisfy \(\alpha(0) = 0\). 
The notations \(A \succ (\prec) 0\) and \(A \succeq (\preceq) 0\) represent positive (negative) definite and semidefinite matrices, respectively.
The symbol \(0_n\) denotes a vector of \(n\) zeros. All other notation in this paper follows standard mathematical conventions.
\subsection{System Definition}
We consider a nonlinear control-affine dynamical system 
\begin{equation}
    \dot{x} = f(x) + g(x)u + \omega,
    \label{eqn:sys_dyn}
\end{equation}
where \( x(t) \in \mathbb{R}^n \) denotes the state, \( u(t
)\in \mathbb{R}^n \) denotes the control input, maps \( f : \mathbb{R}^n \to \mathbb{R}^n \), \( g : \mathbb{R}^n \to \mathbb{R}^{n \times n} \) and disturbance signal \( \omega : \mathbb{R}_+ \to \mathbb{R}^n \) satisfies the following assumptions. 
\begin{assumption}
    The functions $f$ and $g$ are unknown, but bounded and locally Lipschitz continuous. Moreover, $\frac{g(x)+g(x)^\top}{2}$ is assumed to be sign definite with known sign for all $x\in\mathbb{R}^n$. The disturbance $\omega$ is an unknown but bounded and piecewise continuous signal.
 \label{ass: assm1}
\end{assumption}
\subsection{Control Barrier Functions (CBFs)}
A Control Barrier Function (CBF), as defined in \cite{ames2019theory}, is a scalar function that ensures forward invariance of a safe set by enforcing an inequality constraint on the system’s input.

\begin{definition}
\label{def:CBFQP}

A continuously differentiable vector-valued function 
\( h : \mathbb{R}^n \times \mathbb{R}_{+} \to \mathbb{R}^d \), where \(d\) is the number of 
unsafe sets, is given by
\(
h(z,t) := [h_1(z,t), h_2(z,t), \dots, h_d(z,t)]^\top,
\)
where each scalar function \(h_i\) is a candidate \emph{Control Barrier Function (CBF)} defining 
the unsafe set
\begin{equation}
\mathcal{U}_i(t) := \{ z \in \mathbb{R}^n \mid h_i(z,t) < 0\}, \quad \forall i \in \{1, \dots, d \},
\end{equation}
with \( h_i(z,t) < 0 \) in the interior of \(\mathcal{U}_i\), \( h_i(z,t) = 0 \) on 
its boundary, and \( h_i(z,t) \geq 0 \) outside.
\end{definition}

Given a known control-affine system \( \dot{z} = f_z(z) + g_z(z)u_z \) where \(f_z:\mathbb{R}^n \rightarrow \mathbb{R}^n,g_z:\mathbb{R}^{n}\rightarrow\mathbb{R}^{n\times m}\) and \(u_z \in \mathbb{R}^m\), forward invariance over \(\mathbb{R}^n\setminus\mathcal{U}_i(t)\) is ensured by enforcing the following CBF condition for all \( i \in \{1, \dots, d\} \):
\begin{equation}
\label{eq:cbf_condition}
\frac{\partial h_i^\top(z,t)}{\partial z}(f_z(z) + g_z(z)u_z) + \frac{\partial h_i(z,t)}{\partial t} \!\geq\! -\gamma_i(h_i(z,t)),
\end{equation}
where \( \gamma_i : \mathbb{R} \to \mathbb{R} \) is an extended class \( \mathcal{K}_\infty \) function.

The control input \( u_z \in \mathbb{R}^m \) is then synthesized via the following Quadratic Program (QP):
\begin{equation}
\label{eq:qp_general}
\begin{aligned}
&\min_{u_z \in \mathbb{R}^m} \ \frac{1}{2} u_z^\top H u_z + F^\top u_z \\
&\text{ s.t.} \ \mathbb{J}_z h^\top(z,t)(f_z(z)+g_z(z)u_z)\hspace{-0.2em} + \hspace{-0.2em}\frac{\partial h(z,t)}{\partial t} \hspace{-0.2em}+ \hspace{-0.2em}\Gamma(h(z,t)) \geq 0,
\end{aligned}
\end{equation}
where $\Gamma(h(z,t)) := [\gamma_1(h_1(z,t)),\cdots, \gamma_d(h_d(z,t))]^\top$,
\(H \in \mathbb{R}^{m \times m}\) is a positive-definite Hessian matrix and \(F\in \mathbb{R}^m\) is a gradient vector.
This QP guarantees that the system remains within the safe set \(\mathbb{R}^n\setminus\bigcup_{i=1}^d \mathcal{U}_i\).

\begin{lem}[Forward Invariance of the Safe Set]
\label{lem:cbfqp}
Let \( h_i : \mathbb{R}^n \times \mathbb{R}_+ \rightarrow \mathbb{R} \), \( i \in \{1, \dots, d\} \), be continuously differentiable functions defining the unsafe set \(\bigcup_{i=1}^d \mathcal{U}_i \)
If the control input \( u_z \) is chosen as the solution to the Quadratic Program \eqref{eq:qp_general}, and the QP remains feasible for all \( z \in \mathbb{R}^n\setminus\bigcup_{i=1}^d \mathcal{U}_i(t)  \) and $t\in\mathbb{R}_+$, then the safe set \(\mathbb{R}^n\setminus\bigcup_{i=1}^d \mathcal{U}_i(t)\) is forward invariant. That is,
\[
z(0) \in \mathbb{R}^n\setminus\bigcup_{i=1}^d \mathcal{U}_i(0)  \quad \Rightarrow \quad z(t) \in \mathbb{R}^n\setminus\bigcup_{i=1}^d \mathcal{U}_i(t), \quad \forall t \in \mathbb{R}_+.
\]
\end{lem}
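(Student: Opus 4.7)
The plan is a standard Nagumo/comparison-lemma argument applied component-wise to the vector CBF $h$, leveraging the fact that the QP constraint in \eqref{eq:qp_general} enforces the scalar CBF inequality \eqref{eq:cbf_condition} for every $i\in\{1,\dots,d\}$ along any feasible closed-loop trajectory. First I would fix an initial condition $z(0)\in\mathbb{R}^n\setminus\bigcup_{i=1}^d\mathcal{U}_i(0)$, so that $h_i(z(0),0)\ge 0$ for all $i$, and let $z(\cdot)$ denote the corresponding closed-loop trajectory under the QP-based controller $u_z$. By assumption the QP is feasible on $\mathbb{R}^n\setminus\bigcup_{i=1}^d\mathcal{U}_i(t)$, so $u_z$ is well-defined (and, by standard continuity/Lipschitz arguments for CBF-QPs, locally integrable), and the trajectory $z(t)$ exists on some maximal interval $[0,T_{\max})$.

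Next, for each $i$, I would define $\eta_i(t):=h_i(z(t),t)$. Since $h_i$ is continuously differentiable, the chain rule gives
\begin{equation*}
\dot{\eta}_i(t)=\frac{\partial h_i(z(t),t)}{\partial z}^{\!\top}\!\bigl(f_z(z(t))+g_z(z(t))u_z(t)\bigr)+\frac{\partial h_i(z(t),t)}{\partial t}.
\end{equation*}
The $i$-th row of the QP constraint in \eqref{eq:qp_general} is exactly \eqref{eq:cbf_condition}, which rearranges to $\dot{\eta}_i(t)\ge -\gamma_i(\eta_i(t))$ almost everywhere on $[0,T_{\max})$, with $\gamma_i$ an extended class $\mathcal{K}_\infty$ function and therefore locally Lipschitz in a neighborhood of the origin (or at least one-sided Lipschitz enough to invoke a comparison theorem).

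I would then invoke the standard comparison lemma: the scalar initial-value problem $\dot{v}=-\gamma_i(v)$, $v(0)=\eta_i(0)\ge 0$, has $v\equiv\eta_i(0)$ as its solution when $\eta_i(0)=0$ (since $\gamma_i(0)=0$) and a nonnegative solution otherwise (because $\gamma_i(v)>0$ for $v>0$ ensures $v$ is nonincreasing but bounded below by $0$, which is itself an equilibrium). Consequently $\eta_i(t)\ge v(t)\ge 0$ for all $t\in[0,T_{\max})$, i.e.\ $h_i(z(t),t)\ge 0$, so $z(t)\notin\mathcal{U}_i(t)$. Taking the intersection over all $i$ yields $z(t)\in\mathbb{R}^n\setminus\bigcup_{i=1}^d\mathcal{U}_i(t)$ for all $t\in[0,T_{\max})$.

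Finally, I would close the argument by extending $T_{\max}=\infty$: because the trajectory stays in the safe set where the QP is assumed feasible, the controller remains well-defined, and under the local Lipschitz/boundedness hypotheses implicit in the CBF-QP framework the solution cannot blow up in finite time while remaining in a bounded sublevel set of $h$. The main subtlety I expect is the regularity of the QP solution $u_z(t)$ as a function of time (it is typically only locally Lipschitz or even just measurable at points where the active set changes), which requires using the comparison lemma in its absolutely continuous / almost-everywhere form rather than in a purely classical sense; given the smoothness of each $h_i$ this is a routine technical step rather than a conceptual obstacle.
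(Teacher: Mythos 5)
The paper itself gives no proof of this lemma: it is imported as a known result from the time-varying CBF literature (the definition is cited from Lindemann--Dimarogonas, and the invariance claim is the standard consequence of enforcing \eqref{eq:cbf_condition}). Your argument is precisely the standard proof underlying that cited result --- define $\eta_i(t):=h_i(z(t),t)$, observe that the $i$-th row of the QP constraint gives $\dot\eta_i\ge-\gamma_i(\eta_i)$ along the closed-loop trajectory, and apply the comparison lemma --- so it is the right approach and is essentially correct. Two small points are worth tightening. First, your justification that the comparison solution of $\dot v=-\gamma_i(v)$, $v(0)\ge 0$, remains nonnegative leans on ``$0$ is an equilibrium and bounds $v$ from below,'' which by itself does not preclude crossing without a uniqueness hypothesis; the clean argument is that $\gamma_i$ being \emph{extended} class $\mathcal{K}_\infty$ gives $\gamma_i(v)<0$ for $v<0$, so the scalar vector field points upward everywhere below zero and the set $\{v\ge 0\}$ is invariant regardless of uniqueness. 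Second, the continuation step has a mild circularity you should make explicit: feasibility of the QP is only assumed on the safe set, so you should argue by contradiction at the first time $\eta_i$ would reach zero from above --- up to that instant the constraint holds and the differential inequality forbids the crossing --- rather than assuming the controller is defined on all of $[0,T_{\max})$ a priori. You correctly flag the measurability/regularity of $t\mapsto u_z(t)$ as the remaining technicality; using the comparison lemma for absolutely continuous $\eta_i$ handles it. With those repairs the proof is complete and supplies exactly what the paper leaves implicit.
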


\subsection{Problem Formulation}
In this work, we consider the prescribed-time reach-avoid tasks as defined next.
\begin{definition}
\label{def:ptrs}
Given an initial state \( x(0) \in \mathbb{R}^n \), let \( \mathcal{U}(t) \subset \mathbb{R}^n \) denote the time-varying unsafe set, and let \( \mathcal{R} \subset \mathbb{R}^n \setminus \mathcal{U}(t) \) denote the target set. For the initial state \( x(0) \in \mathbb{R}^n \setminus \mathcal{U}(0) \) and a prescribed time \( t_f > 0 \), the system in \eqref{eqn:sys_dyn} is said to satisfy \emph{prescribed-time reach-avoid (PT-RA) task} if
$x(t_f) \in \mathcal{R}$ and $x(t) \in \mathbb{R}^n \setminus \mathcal{U}(t), \ \forall t \in [0, t_f]$.
\end{definition}
\begin{problem}\label{problem1}
Given an unknown system defined in \eqref{eqn:sys_dyn} satisfying Assumption~\ref{ass: assm1}, time-varying unsafe set $\mathcal{U}(t)$, a target set $\mathcal{R}$, a prescribed time $t_f>0$, and initial state \( x(0) \in \mathbb{R}^n \setminus \mathcal{U}(0) \); the objective is to synthesize a continuous controller \( u:\mathbb{R}^n\times \mathbb{R}_+\rightarrow \mathbb{R}^n\) such that the system state \( x(t) \) satisfies the \emph{prescribed-time reach-avoid task} as defined in Definition~\ref{def:ptrs} using a Control Barrier Function-based Quadratic Program (CBF-QP).
\end{problem}

In principle, CBFs can achieve the PT-RA objective by enforcing set invariance and guaranteeing reach-avoid behavior, provided the CBF constraints are feasible. However, this approach relies on precise system dynamics, making it unsuitable for systems with unknown or uncertain dynamics.

To address this, in the next Section, we introduce a \emph{Virtual Confinement Zone} (VCZ), a designer-specified, time-varying region, endowed with nominal virtual dynamics and a virtual control input, that encloses the system trajectory with appropriate approximation-free control design. We then generate a virtual control input via CBF-QP formulation to ensure satisfaction of the PT-RA task by VCZ. Next, as the true system trajectory remains confined within VCZ, it inherits its safety and reachability properties, ensuring the PT-RA task. Detailed controller design and analysis are presented in Sections \ref{Sec:Controller Design} and \ref{sec:analysis}, respectively. The schematic of the control flow is shown in Figure \ref{fig:floww}.

\begin{figure}[tpb] 
    \centering
    \includegraphics[width=0.9\columnwidth]{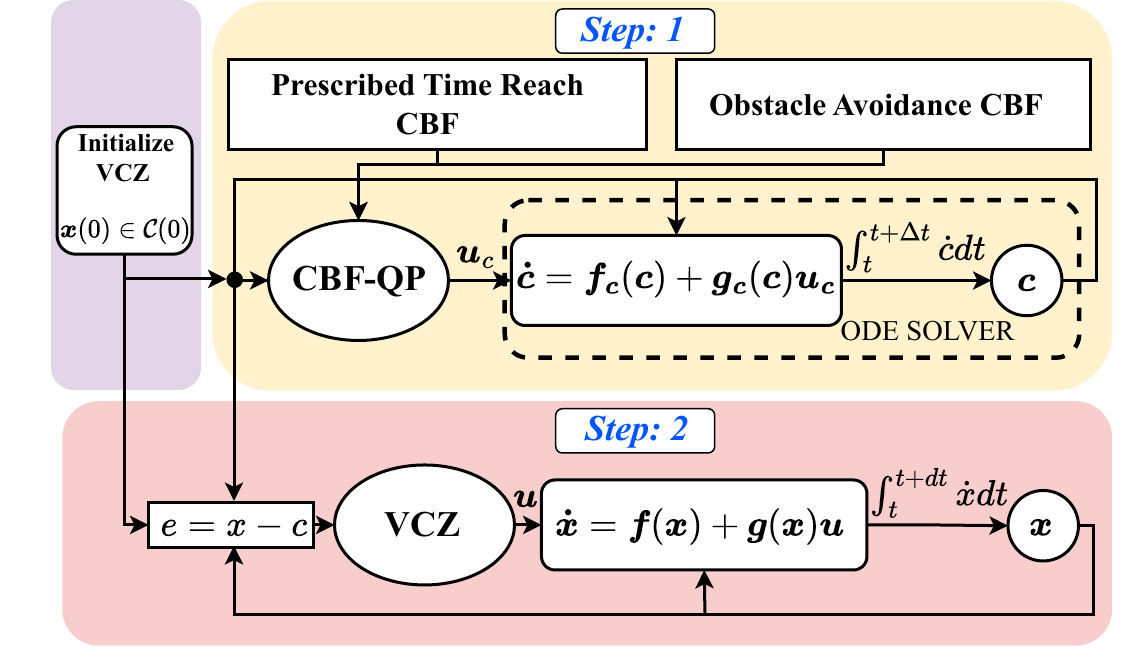}
    \caption{Control Flowchart}
    \label{fig:floww}
\end{figure}

\section{Controller Design} \label{Sec:Controller Design}

To address unknown dynamics, we first introduce the \emph{Virtual Confinement Zone} (VCZ), a user-specified, time-varying region equipped with nominal virtual dynamics and a virtual control input.

\begin{definition}[Virtual Confinement Zone]
\label{def:VCZ}
A \emph{Virtual Confinement Zone} (VCZ) is a time-varying open ball \( \mathcal{C}(t) \subset \mathbb{R}^n \) defined as $
\mathcal{C}(t) := \mathcal{B}^\circ(c(t),r_c)$, where $c(t)$ is a time-varying center and \( r_c > 0 \) is a fixed radius.
\end{definition}

The center \(c(t)\) is assumed to evolve according to a user-defined virtual dynamical system of relative degree one:
\begin{equation}
    \dot{c} = f_c(c) + g_c(c) u_c,
    \label{eqn:virtual_model}
\end{equation}
where \(f_c : \mathbb{R}^n \to \mathbb{R}^n\) and
\(g_c : \mathbb{R}^n \to \mathbb{R}^{n \times m}\) are chosen as locally Lipschitz continuous functions, and 
\(u_c \in \mathbb{R}^m\) is the virtual control input.

The VCZ \(\mathcal{C}(t)\) serves as a dynamically 
adjustable admissible set that guides and constrains the actual system's state. By appropriately designing the virtual dynamics \((f_c, g_c, u_c)\), 
one can generate reference trajectories that respect PT-RA 
requirements. Moreover, the radius \(r_c>0\) is chosen such that the terminal ball
\(\mathcal{B}^\circ(c(t_f), r_c)\) lies entirely within the target region 
\(\mathcal{R}\).

With the (VCZ) and virtual dynamics defined, the overall control strategy proceeds in two concurrent stages:
\begin{itemize}
    \item[(1)] \textit{Virtual control design:} a CBF-QP-based controller that navigates the VCZ center.
    \item[(2)] \textit{Confinement control design:} a separate controller maintains the system state within the evolving VCZ.
\end{itemize}

\subsection*{Step 1: Virtual Control Design using CBF-QP over VCZ dynamics}
\label{subsec: virtual_control}

Consider a prescribed-time reach-avoid problem defined in Problem \ref{problem1}, where 
unsafe region \(\mathcal{U}(t)\) is defined as the union of $d-1$ moving obstacles 
\[
\mathcal{U}(t) :=  \bigcup_{j=1}^{d-1} \mathcal{U}_j(t), \forall t\in[0,t_f],
\]
where each obstacle $\mathcal{U}_j(t)$ is modeled as an open Euclidean ball \(\mathcal{U}_j(t) := \mathcal{B}^\circ(b_{u_j}(t), r_{u_j})\) with center 
\(b_{u_j}(t) \in \mathbb{R}^n\) and radius \(r_{u_j} > 0\).
The target set is a closed ball 
\(\mathcal{R} := \mathcal{B}(b_R, r_R)\) with \(b_R\in \mathbb{R}^n\) and
by Definition~\ref{def:ptrs}, it satisfies
\(\mathcal{R} \subset \mathbb{R}^n\setminus\mathcal{U}(t)\). 

\subsubsection*
{a)~~Initialization of \emph{VCZ}.}
\label{subsec:VCZ_cons}
At \(t=0\), we initialize a Virtual Confinement Zone (VCZ) around the initial state \(x(0)\) [cf. Def \ref{def:VCZ}],  i.e., \(x(0) \in \mathcal{C}(0)\), and 
\(\mathcal{C}(0) \subset \mathbb{R}^n\setminus\mathcal{U}(0)\), ensuring that the system starts in a 
safe configuration. 

Additionally, the radius \(r_c > 0\) of VCZ is chosen as \(r_c < r_R\) to ensure that \(\mathcal{B}^\circ(c(t_f), r_c)\subset \mathcal{R}= \mathcal{B}(b_R, r_R)\).
\subsubsection*{b)~~Barrier functions for PT-RA Task.}
Assuming that the true state lies within \(\mathcal{C}(t)\), an equivalent subobjective is to ensure that the entire VCZ remains safe and reaches the target set \(\mathcal{R}\) at the prescribed time \(t_f\):
\[
\mathcal{C}(t) \subset \mathbb{R}^n \setminus \mathcal{U}(t), \quad \forall t \in [0, t_f]\text{ and }
\mathcal{C}(t_f) \subset \mathcal{R}.
\]

We make the following assumptions about the obstacle sets:  
\begin{assumption}[Obstacle motion and separation]
\label{ass:qp_feasibility}
For all 
\(t\in [0,t_f]\) and for all \(j \in \{1,2,\dots,d-1\}\):
\begin{enumerate}
    \item Each obstacle has known radius \(r_{u_j}\), center position \(b_{u_j}(t)\), 
    and velocity \(\dot{b}_{u_j}(t)\). 
    The quantity  \(\|\dot{b}_{u_j}\|\) is bounded, and obstacle motion \(b_{u_j}(t)\) is continuous.
    \item Minimum distance between any two obstacles satisfies  
    \[
    \min_{i \neq j} \| b_{u_i}(t) - b_{u_j}(t) \| \geq 2r_c+r_{u_i}+r_{u_j},
    \]
    ensuring that the VCZ \(\mathcal{C}(t)\) can pass between them.
    \item At \(t=t_f\), the target set \(\mathcal{R}\) is obstacle-free: \(\mathcal{U}_j(t_f)=\mathcal{B}^\circ(b_{u_j}(t_f), r_{u_j}) \cap \mathcal{R} = \emptyset, \ \forall j\).

\end{enumerate}
\end{assumption}

To ensure the entire VCZ remains outside all obstacles, it suffices for its center $c(t)$ to satisfy the tightened constraint
\[
\|c(t)- b_{u_j}(t)\| \;\ge\; r_{u_j} + r_c, \quad j\in\{1,\dots,d-1\}.
\]
Accordingly, the unsafe set for the \emph{VCZ} center is
\[
\mathcal{U}_{VCZ}(t) := \bigcup_{j=1}^{d-1} 
\mathcal{B}^\circ(b_{u_j}(t), r_{u_j} + r_c).
\]
To guarantee the VCZ remains within the target region \(\mathcal{R}_{VCZ}\) at \(t=t_f\), we define the shrunk target set for the center as 
\[
\mathcal{R}_{VCZ} := \mathcal{B}\big(b_R, r_R - r_c\big),
\]

By construction, maintaining the center of VCZ \(c(t)\) within the tightened safe set \(\mathbb{R}^n\setminus \mathcal{U}_{VCZ}(t)\) for all $t\in[0,t_f]$ and driving it into $\mathcal{R}_{VCZ}$ at $t_f$  ensure that the entire VCZ satisfies PT-RA task.
Formally,
\begin{align}
\label{eqn:reach_avoid}
&\big(c(t)\in \mathbb{R}^n\setminus \mathcal{U}_{VCZ}(t),\ \forall t\in[0,t_f]\big) \wedge\big(c(t_f)\in \mathcal{R}_{VCZ}\big) \notag \\
&\Longrightarrow
\big(\mathcal{C}(t)\subset \mathbb{R}^n\setminus\mathcal{U}(t),\ \forall t\in[0,t_f]\big) \wedge \big(\mathcal{C}(t_f)\subseteq \mathcal{R}\big).
\end{align}
A formal proof of this implication is provided in the next section.

To address the prescribed-time reach-avoid problem for the VCZ center \(c(t)\) with respect to the unsafe set \(\mathcal{U}_{VCZ}(t)\) and the target set \(\mathcal{R}_{VCZ}\), we employ the CBF–QP framework [Def.~\ref{def:CBFQP}] applied to the virtual dynamics \eqref{eqn:virtual_model}. 
We consider \(d-1\) obstacles forming the unsafe set \(\mathcal{U}_{\text{VCZ}}(t)\), with corresponding control barrier functions
\begin{equation}
\label{eqn:avoid_cbf}
    h_j(c, t) := \|c - b_{u_j}(t)\|^2 - (r_{u_j}+r_c)^2,
    \quad j \in\{ 1,2,\dots,d-1\}.
\end{equation}

To ensure $c(t)$ reaches the target \(\mathcal{R}_{VCZ}\), within the prescribed time $t_f$, we construct a shrinking set that initially encloses both \(c(t)\) and \(\mathcal{R}_{VCZ}\) and contracts over time until it is fully contained in 
\(\mathcal{R}_{\text{VCZ}}\) at $t=t_f$. A barrier function associated with this set guarantees the invariance of $c(t)$ and thus the prescribed-time convergence.

Formally, the shrinking set is defined as \(\mathcal{R}_s(t) := \mathcal{B}(b_R, r_r(t)) \subset \mathbb{R}^n\), where \( r_r(t) := -m\frac{t}{t_f} + q \) with initial radius \( r_{r_0} \geq \|c(0) - b_R\| \) at \(t=0\) and final radius\( r_{r_f} \leq r_R - r_c > 0 \)  at \(t=t_f\). The parameters \( m \) and \( q \) are chosen to satisfy the boundary conditions based on the geometry of the sets, and can be explicitly computed as \( m = (r_{r_0} - r_{r_f})/t_f \) and \( q = r_{r_0} \).

 The corresponding barrier function for target reach is 
 \begin{equation}
\label{eqn:reach_cbf}
    h_d(c, t) := r_r(t)^2-\|c - b_R\|^2.
\end{equation}

\subsubsection*{c)~~Controller synthesis for prescribed-time reach-avoid.}
Now we collect \(d-1\) CBFs for obstacle avoidance \eqref{eqn:avoid_cbf} and one CBF for the 
prescribed-time reach \eqref{eqn:reach_cbf} as
\begin{equation}
\label{eqn:cbf_vector}
h(c,t) := [h_1(c,t), h_2(c,t), \dots, h_d(c,t)]^\top.
\end{equation}

By Lemma~\ref{lem:cbfqp}, the CBF conditions guarantee forward invariance of 
the set associated with each CBF 
\begin{equation}
\label{eqn:cbf_confin}
c(t) \in \big(\mathbb{R}^n\setminus \mathcal{U}_{\text{VCZ}}(t)\big) \cap \mathcal{R}_s(t)\subset\mathcal{R}_s(0), \forall t \in [0,\;t_f].    
\end{equation}
Consequently, the VCZ center \(c(t)\) satisfies PT-RA task 
\eqref{eqn:reach_avoid}.

To enforce these conditions, we synthesize the virtual control input \(u_c \in \mathbb{R}^m\) via the following quadratic program (QP) based on the virtual dynamics \eqref{eqn:virtual_model}:
\begin{equation}
\label{eq:cbf_qp_general}
\begin{aligned}
\min_{u_c \in \mathbb{R}^m} \quad & \tfrac{1}{2} u_c^\top H u_c + F^\top u_c \\
\text{s.t.} \quad & \dot{h}(c,t) + \Gamma(h(c,t)) \geq 0,
\end{aligned}
\end{equation}
where \(h(c,t) := [h_1(c,t), \dots, h_d(c,t)]^\top\) and 
\(\Gamma(h(c,t)) := [\gamma_1(h_1(c,t)), \dots, \gamma_d(h_d(c,t))]^\top\). 
The matrices \(H, F\) and the extended class-\(\mathcal{K}_\infty\) function \(\gamma_j(\cdot)\) 
are defined in Def.~\ref{def:CBFQP}.

\begin{assumption}
\label{ass:non_empty_u}
\(\forall t \in [0,t_f]\) and \(\forall c(0)\in \mathbb{R}^n\setminus \mathcal{U}_{VCZ}(0)\), there exists a control input 
\(u_c(t) \in \mathbb{R}^m\) that satisfies the CBF constraints in 
\eqref{eq:cbf_qp_general}.
\end{assumption}
\begin{remark}
From 
\(\dot h_j(c,t) = \tfrac{\partial h_j(c,t)}{\partial c}\big(f_c(c)+g_c(c)u_c\big) 
+ \tfrac{\partial h_j(c,t)}{\partial t}\) in \eqref{eq:cbf_qp_general}, the coefficient of the control input 
is \(\tfrac{\partial h_j(c,t)}{\partial c}\, g_c(c)\). The virtual input matrix 
\(g_c(c)\) must therefore be chosen so that this term does not vanish (at least 
on or near \(h_j(c,t)=0\)) for all \(j\in\{1,\dots,d\}\), ensuring that the CBF 
constraints in \eqref{eq:cbf_qp_general} can be enforced.
\end{remark}
\begin{remark}
 Theoretically, the control framework remains valid for any parameter selection of the virtual dynamics subject to given assumptions and construction. However, in practical implementations, it is advantageous to employ an approximate reference model to enhance performance and facilitate a more effective realization of the control strategy.  
\end{remark}

The virtual dynamics, driven by the virtual control input \(u_c \in \mathbb{R}^m\), ensure the PT-RA task. The next step ensures that the true state remains within the VCZ
\(\mathcal{C}(t)\), which allows the true system to inherit the reach–avoid property.
\subsection*{Step 2: Confinement Control Design}

In Step~1, it has been shown that the entire VCZ \(\mathcal{C}(t)\) avoids the 
true unsafe sets for all $t\in[0,t_f]$ and reaches the target set at time \(t_f\). Hence, if the 
true state trajectory $x(t)$ is confined within the VCZ $\mathcal{C}(t)$, the PT-RA objective in 
Definition~\ref{def:ptrs} is satisfied.
The goal of Step 2 is to ensure this confinement:
\begin{equation}
\label{eq:conf_objective}
x(t) \in \mathcal{C}(t), \qquad \forall t \in [0,t_f].
\end{equation}
To enforce \eqref{eq:conf_objective}, we design a feedback controller \(u\) based on the error between the true state and the VCZ center
\begin{equation}
    e := x - c
\label{eqn:error_def}
\end{equation}
and its normalized form
$\hat{e} := \frac{\|e\|}{r_c}$.
From Definition  \ref{def:VCZ} \( \| x - c \| < r_c \), the normalized error \( \hat{e} \) satisfies \(0 \leq \hat{e} < 1\).

The control input ensuring confinement within VCZ \( \mathcal{C}(t)\;\; \forall t\in [0,t_f] \) is defined as:  
\begin{equation}
u := 
\begin{cases} 
-k\zeta(\hat{e}) \phi(e) & \text{if } \|x-c\| > 0, \\
0_n & \text{if } \|x-c\| = 0,
\end{cases}
\label{eqn:control}
\end{equation}
where \(\zeta : [0,1) \to \mathbb{R}\) is defined as 
\(\zeta(\hat{e}) := \ln\!\left(\tfrac{1+\hat{e}}{1-\hat{e}}\right)\), 
acting as a logarithmic barrier that ensures \(0 \leq \hat{e} < 1\). 
The design gain \(k>0\) if \(\tfrac{g(x)+g(x)^\top}{2} \succ 0\), and \(k<0\) if 
\(\tfrac{g(x)+g(x)^\top}{2} \prec 0\). Furthermore, $\phi(e) := \frac{e}{\|e\|}$.
\begin{remark} {The control input \eqref{eqn:control} decouples magnitude \( k\zeta(\hat{e}) \) 
from direction \( \phi(e) \), which facilitates the stability analysis. 
Lemma~\ref{lem:continuty_of_u} establishes continuity of the controller 
with respect to \( e = x - c \). This structure further lends itself to 
the analysis of bounded control, though such an extension is left for future work.}

For detailed proof, refer to Lemma~\ref{lem:continuty_of_u}
\end{remark}

By Step~1, the VCZ center trajectory \(c(t)\) satisfies the PT-RA
specification with respect to inflated sets \eqref{eqn:reach_avoid}. By Step~2, 
the true state $x(t)$ is confined within the VCZ \(\mathcal{C}(t)\) for all 
\(t\in[0,t_f]\) \eqref{eq:conf_objective}. Together, these yield the implication chain:
\begin{align}\label{eq:chain_conclusion}
&\big(c(t)\in \mathbb{R}^n\setminus \mathcal{U}_{\text{VCZ}}(t),\ \forall t\in[0,t_f]\big)\ 
   \wedge\ \big(c(t_f)\in \mathcal{R}_{\text{VCZ}}\big) \notag\\
&\Longrightarrow\!
   \big(\mathcal{C}(t)\subset \mathbb{R}^n\setminus\mathcal{U}(t),\ \forall t\in[0,t_f]\big)
   \wedge\big(\mathcal{C}(t_f)\subseteq \mathcal{R}\big) \notag\\
&\Longrightarrow\!
   \big(x(t)\in \mathbb{R}^n\setminus\mathcal{U}(t),\forall t\in[0,t_f]\big) \wedge\big(x(t_f)\in \mathcal{R}\big).
\end{align}
Thus, the true system trajectory satisfies the prescribed-time reach--avoid 
specification of Definition~\ref{def:ptrs}.

\section{Stability and Invariance Analysis}\label{sec:analysis}
This section establishes the stability of the closed-loop system and demonstrates set invariance with respect to the true obstacles.
We first verify the well-posedness of the CBF-QP \eqref{eq:cbf_qp_general} and continuity of the control input \(u\). 
Next, we show that the true system state remains within the \emph{VCZ} for all \(t \in [0,t_f]\). 
Finally, leveraging this confinement, we prove that the true system satisfies the PT-RA condition [Def. \ref{def:ptrs}] via set invariance.

\begin{lem}[Boundedness of solution of CBF-QP]
\label{lem:bounded_u}
Consider the virtual system \eqref{eqn:virtual_model} with locally Lipschitz \(f_c\) and \(g_c\).
The virtual control \(u_c\) derived from the CBF-QP \eqref{eq:cbf_qp_general} using the CBF vector \eqref{eqn:cbf_vector} satisfying \(\frac{\partial h}{\partial c}g_c(c)\neq 0\), admits a unique optimizer \(u_c^\ast(c,t)\) for every
\(c\in\mathcal R_s(0)\text{ and }t\in [0,\;t_f]\). Moreover, under Assumption~\ref{ass:non_empty_u},
\(\|u_c^\ast\|\) is uniformly bounded on \(\mathcal{R}_s(0)\) and piecewise continuous in \(t\).
\end{lem}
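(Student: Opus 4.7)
The proof splits into three pieces matching the three claims---uniqueness, boundedness, and piecewise continuity.

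For existence and uniqueness, the plan is to exploit the QP structure directly. Because $H\succ 0$, the objective $\tfrac{1}{2}u_c^\top H u_c + F^\top u_c$ is strongly convex and coercive. The CBF constraint $\dot h(c,t)+\Gamma(h(c,t))\geq 0$ is affine in $u_c$ with coefficient matrix $A(c,t):=\mathbb{J}_c h(c,t)\,g_c(c)$, which is non-vanishing by the standing hypothesis. The feasible set is therefore a closed convex polyhedron, and it is non-empty for every $(c,t)\in\mathcal{R}_s(0)\times[0,t_f]$ by Assumption~\ref{ass:non_empty_u}. Strongly convex minimization over a non-empty closed convex set then yields the unique optimizer $u_c^\ast(c,t)$.

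For uniform boundedness, I will combine strong convexity with compactness of the admissible domain $\mathcal{R}_s(0)\times[0,t_f]$. Fix any feasible $\bar u_c(c,t)$ granted by Assumption~\ref{ass:non_empty_u}. Optimality gives $\tfrac{1}{2}(u_c^\ast)^\top H u_c^\ast + F^\top u_c^\ast \leq \tfrac{1}{2}\bar u_c^\top H \bar u_c + F^\top \bar u_c$, and $\lambda_{\min}(H)>0$ together with a Young-type inequality bound $\|u_c^\ast\|$ by a polynomial in $\|\bar u_c\|$ and $\|F\|$. It therefore suffices to produce a uniformly bounded feasible selection. Continuity of $f_c,\,g_c,\,h,\,\partial_t h,\,\Gamma(h)$ together with Assumption~\ref{ass:qp_feasibility} (continuous $b_{u_j}$ and bounded $\dot b_{u_j}$) shows that the pair $(A(c,t), b(c,t))$, with $b(c,t):=\mathbb{J}_c h\,f_c+\partial_t h+\Gamma(h)$, is uniformly bounded on the compact domain. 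Hoffman's error bound for the polyhedron $\{u_c:A(c,t)u_c+b(c,t)\geq 0\}$, with constraint qualification supplied by the non-vanishing $\mathbb{J}_c h\,g_c$, then delivers a uniformly bounded minimum-norm feasible point, completing the boundedness claim.

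For piecewise continuity in $t$, note that the QP data $(A(c(t),t),b(c(t),t))$ are piecewise continuous: $b_{u_j}(t)$ is continuous by Assumption~\ref{ass:qp_feasibility}, but $\dot b_{u_j}(t)$, which enters $\partial_t h$, is only asserted to be bounded and may exhibit jumps. Completing the square reduces the QP to an $H$-weighted metric projection of $-H^{-1}F$ onto the feasible polyhedron, so classical parametric-QP continuity delivers continuity of $u_c^\ast$ on each continuity interval of the data; at the (finitely many) discontinuities of $\dot b_{u_j}$, one-sided limits exist, and $u_c^\ast(c(t),t)$ is thus piecewise continuous in $t$. The main obstacle I foresee is the uniform-boundedness step: Assumption~\ref{ass:non_empty_u} only provides pointwise feasibility, so promoting it to a uniformly bounded selection is exactly the role of the Hoffman/parametric-polyhedron argument; once that bounded selection is in hand, the remaining pieces follow routinely from strong convexity and standard continuity results for strictly convex QPs.
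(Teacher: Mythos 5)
Your decomposition (uniqueness via strong convexity, boundedness via compactness of the data, piecewise continuity via parametric-QP arguments) matches the paper's, and the uniqueness and piecewise-continuity pieces are essentially identical: the paper likewise writes the constraints explicitly as $a_j(c,t)^\top u_c \ge \rho_j(c,t)$, bounds each $\rho_j$ using Assumption~\ref{ass:qp_feasibility} and the compactness of $\mathcal{R}_s(0)$, and invokes $H\succ 0$ for the unique optimizer. Where you genuinely diverge is the uniform-boundedness step. The paper argues via coercivity of the cost and Berge's maximum theorem applied to the feasible-set correspondence, concluding that the argmin set is nonempty and compact and hence that $\|u_c^\ast\|$ is uniformly bounded; you instead compare the optimal value against a uniformly bounded feasible point, which you propose to extract from Hoffman's error bound for the parametric polyhedron. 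Your route is the more honest one: compactness of the argmin at each fixed $(c,t)$ does not by itself yield a bound that is \emph{uniform} over $\mathcal{R}_s(0)\times[0,t_f]$, and Berge's theorem requires (upper hemi)continuity and local uniform boundedness of the constraint correspondence, which is precisely the point in question. That said, your argument inherits the same residual gap you flag yourself: the Hoffman constant of $\{u_c : A(c,t)u_c \ge \rho(c,t)\}$ is not controlled merely by each row of $A(c,t)=\mathbb{J}_c h\, g_c$ being nonzero --- it can blow up as active rows become nearly linearly dependent (e.g., the center squeezed between two obstacle constraints and the shrinking reach constraint). To close this, you would need a uniform constraint qualification (a uniform Slater-type margin over the compact domain), which is strictly stronger than the pointwise feasibility granted by Assumption~\ref{ass:non_empty_u}. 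Your treatment of piecewise continuity is also slightly sharper than the paper's one-line assertion, since you correctly isolate $\dot b_{u_j}$ as the only possibly discontinuous datum and restrict the parametric-QP continuity argument to its continuity intervals.
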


\begin{lem}[Continuity of controller]
\label{lem:continuty_of_u}
For a system \eqref{eqn:sys_dyn} satisfying Assumption~\ref{ass: assm1}, the control input \( u \in \mathbb{R}^n \) defined in equation \eqref{eqn:control} 
is continuous over error \(e\) \eqref{eqn:error_def}.
\end{lem}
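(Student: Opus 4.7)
}
The plan is to split the argument into two regimes: continuity on the open set $\{e \neq 0\}$, which follows by composition, and continuity at the single point $e = 0$, which is where the piecewise definition could cause a problem. Since $u$ is only a function of $e$ (through $\hat e = \|e\|/r_c$ and $\phi(e) = e/\|e\|$), it suffices to work directly in the error variable.

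First I would verify continuity on $\mathbb{R}^n \setminus \{0\}$. On this set, the norm $\|e\|$ is continuous and strictly positive, so $\hat e(e) = \|e\|/r_c$ is continuous and takes values in $[0,1)$ by the VCZ constraint of Definition~\ref{def:VCZ}. Because $\zeta(s) = \ln\bigl(\tfrac{1+s}{1-s}\bigr)$ is continuous on $[0,1)$ and $\phi(e) = e/\|e\|$ is continuous on $\mathbb{R}^n \setminus \{0\}$, the product $-k\,\zeta(\hat e)\,\phi(e)$ is continuous there as a composition and product of continuous functions.

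Next I would handle continuity at $e = 0$. Since $u(0) = 0_n$, I need to show $\lim_{e\to 0} u(e) = 0_n$. The key observation is that $\phi(e)$ is merely bounded ($\|\phi(e)\| = 1$ for $e \neq 0$), so it cannot supply any decay, but the scalar factor $\zeta(\hat e)$ does: as $e \to 0$, $\hat e \to 0$ and hence
\begin{equation*}
\zeta(\hat e) \;=\; \ln\!\Bigl(\tfrac{1+\hat e}{1-\hat e}\Bigr) \;\longrightarrow\; \ln(1) \;=\; 0.
\end{equation*}
Therefore
\begin{equation*}
\|u(e)\| \;=\; |k|\,|\zeta(\hat e)|\,\|\phi(e)\| \;=\; |k|\,|\zeta(\hat e)| \;\longrightarrow\; 0,
\end{equation*}
so $u(e) \to 0_n = u(0)$, establishing continuity at the origin.

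I do not expect any significant obstacle: the main subtlety is that $\phi(e)$ itself is discontinuous at $e = 0$ (any direction is attained by sequences tending to $0$), but this is harmless because it is multiplied by the scalar $\zeta(\hat e)$ that vanishes there, producing a squeeze-style bound. Combining the two regimes yields continuity of $u$ on all of $\mathbb{R}^n$ in the variable $e$, as claimed.
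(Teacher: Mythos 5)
Your proposal is correct and follows essentially the same two-part structure as the paper's proof: continuity on $\mathcal{B}^\circ(0_n,r_c)\setminus\{0_n\}$ by composition, then a separate limit argument at $e=0$. The only difference is in the final step, where the paper Taylor-expands $\zeta(\hat e)\approx 2\hat e+\tfrac{2}{3}\hat e^3+\dots$ and cancels $\|e\|$ against the $1/\|e\|$ in $\phi(e)$, whereas you use the cleaner squeeze bound $\|u(e)\|=|k|\,|\zeta(\hat e)|\,\|\phi(e)\|=|k|\,|\zeta(\hat e)|\to 0$, which avoids the approximation entirely and is arguably tidier.
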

The proofs of Lemma~\ref{lem:bounded_u} and Lemma~\ref{lem:continuty_of_u} are provided in Appendix~\ref{app:a} and Appendix~\ref{app:b}, respectively.
 
\begin{theorem}
\label{thm:invariance}
Consider an unknown dynamical system \eqref{eqn:sys_dyn} subject to Assumption~\ref{ass: assm1}, initialized at \(x(0)\in\mathbb{R}^n\setminus\mathcal{U}(0)\).
Let the control input \(u\) be defined in \eqref{eqn:control},
\[
u := 
\begin{cases} 
-k\zeta(\hat{e}) \phi(e) & \text{if } \|x-c\| > 0 \\
0_n & \text{if } \|x-c\| = 0,
\end{cases}
\]
where \(c\) is derived from the virtual dynamics \eqref{eqn:virtual_model}
\[
 \dot{c} = f_c(c) + g_c(c) u_c,
\]
of the \emph{VCZ} constructed in Section \ref{subsec:VCZ_cons}, such that \(x(0)\in \mathcal{C}(0)\subset\mathbb{R}^n\setminus\mathcal{U}(0)\), with \(f_c\) and \(g_c\) locally Lipschitz. The virtual control input \(u_c\) is derived from the CBF-QP \eqref{eq:cbf_qp_general} 
\[
\begin{aligned}
\min_{u_c \in \mathbb{R}^m} \quad & \tfrac{1}{2} u_c^\top H u_c + F^\top u_c \\
\text{s.t.} \quad & \dot{h}(c,t) + \Gamma(h(c,t)) \geq 0,
\end{aligned}
\]
where \(h(c,t) := [h_1(c,t), h_2(c,t), \dots, h_d(c,t)]^\top.\)
Under Assumption~\ref{ass:non_empty_u},

The resulting closed-loop system satisfies the PT-RA task
\[
\big(x(t)\in \mathbb{R}^n\setminus\mathcal{U}(t),\ \forall t\in[0,t_f]\big)\ \wedge\ \big(x(t_f)\in \mathcal{R}\big).
\]
\end{theorem}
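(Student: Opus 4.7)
My proof will follow the three-link implication chain displayed in \eqref{eq:chain_conclusion}: I will first establish the prescribed-time reach-avoid property for the VCZ center $c(t)$ with respect to the \emph{tightened} sets $\mathcal{U}_{\text{VCZ}}(t)$ and $\mathcal{R}_{\text{VCZ}}$, next transfer this property to the whole ball $\mathcal{C}(t)$ by a simple geometric inclusion, and finally show that the true state $x(t)$ stays inside $\mathcal{C}(t)$ for all $t\in[0,t_f]$. Combining the three links gives the theorem. The main obstacle is the last link, because it must be established under Assumption~\ref{ass: assm1} (unknown $f,g,\omega$) using only the approximation-free feedback \eqref{eqn:control}.

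\textbf{Virtual trajectory and geometric tightening.} For the first link I invoke Lemma~\ref{lem:bounded_u} to obtain a unique, uniformly bounded, piecewise-continuous optimizer $u_c^{\ast}(c,t)$ of the CBF-QP \eqref{eq:cbf_qp_general} on $\mathcal{R}_s(0)\times[0,t_f]$; feasibility is guaranteed by Assumption~\ref{ass:non_empty_u}. Applying Lemma~\ref{lem:cbfqp} to the virtual system \eqref{eqn:virtual_model} with the stacked CBF vector \eqref{eqn:cbf_vector} then yields forward invariance of $(\mathbb{R}^n\setminus\mathcal{U}_{\text{VCZ}}(t))\cap\mathcal{R}_s(t)$ along $c(t)$, and since $r_r(t_f)\le r_R-r_c$ by construction of the shrinking set, $c(t_f)\in\mathcal{R}_{\text{VCZ}}$. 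The implication \eqref{eqn:reach_avoid} is then a direct ball inclusion: $\|c(t)-b_{u_j}(t)\|\ge r_{u_j}+r_c$ forces $\mathcal{B}^\circ(c(t),r_c)\cap\mathcal{B}^\circ(b_{u_j}(t),r_{u_j})=\emptyset$, and $\|c(t_f)-b_R\|\le r_R-r_c$ forces $\mathcal{B}^\circ(c(t_f),r_c)\subset\mathcal{R}$.

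\textbf{Confinement of the true state (main obstacle).} For the last link I must show that $\hat e(t)=\|x(t)-c(t)\|/r_c<1$ on $[0,t_f]$, knowing only that $\hat e(0)<1$. Differentiating $\|e\|$ along the closed loop gives
\[
\tfrac{d}{dt}\|e\|=\phi(e)^\top\bigl[f(x)+g(x)u+\omega-f_c(c)-g_c(c)u_c^{\ast}\bigr].
\]
Substituting \eqref{eqn:control} and using the identity $\phi(e)^\top g(x)\phi(e)=\phi(e)^\top\tfrac{g(x)+g(x)^\top}{2}\phi(e)$, the contribution of the barrier feedback becomes $-|k|\,\zeta(\hat e)\,\bigl|\phi(e)^\top\tfrac{g(x)+g(x)^\top}{2}\phi(e)\bigr|$, whose sign is correct for either branch of the gain selection in Assumption~\ref{ass: assm1}. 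All remaining terms are uniformly bounded on $\mathcal{C}(t)$: $f,g$ by local Lipschitzness on the compact set $\overline{\mathcal{C}(t)}$, $\omega$ by Assumption~\ref{ass: assm1}, and $f_c(c),\,g_c(c)u_c^{\ast}$ by Lemma~\ref{lem:bounded_u}; call the resulting bound $M$. Because $\zeta(\hat e)=\ln\!\tfrac{1+\hat e}{1-\hat e}\to\infty$ as $\hat e\to 1^{-}$, there exists $\hat e^{\ast}\in(\hat e(0),1)$ above which the contraction term strictly dominates $M$, forcing $\frac{d}{dt}\|e\|<0$. A standard contradiction argument (or equivalently a barrier-Lyapunov candidate such as $V=\zeta(\hat e)^{2}$) then rules out $\hat e$ reaching $1$ in finite time; existence of solutions on the full interval is supplied by the continuity established in Lemma~\ref{lem:continuty_of_u} together with the boundedness of $u_c^{\ast}$. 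The expected technical subtlety is book-keeping the sign-definite quadratic form uniformly in $x\in\overline{\mathcal{C}(t)}$ so that the dominance threshold $\hat e^{\ast}$ is well-defined. Chaining the three links then delivers $x(t)\in\mathbb{R}^n\setminus\mathcal{U}(t)$ for all $t\in[0,t_f]$ and $x(t_f)\in\mathcal{R}$, completing the proof.
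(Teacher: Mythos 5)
Your proposal is correct and follows essentially the same route as the paper: the same implication chain (CBF-QP forward invariance for the virtual center via Lemma~\ref{lem:cbfqp} and Lemma~\ref{lem:bounded_u}, geometric transfer from the tightened sets to $\mathcal{C}(t)$ by the triangle inequality, and confinement of $x(t)$ in $\mathcal{C}(t)$ via the barrier term $\zeta(\hat e)$ with the sign-definiteness of $\tfrac{g+g^\top}{2}$). If anything, your dominance/contradiction argument for the confinement step is spelled out in more detail than the paper's, which invokes the candidate $V=\zeta^2(\hat e)$ and defers the uniform-ultimate-boundedness argument to standard comparison results and the funnel-control literature.
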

\begin{pf}
    We first show that \(x(t)\in\mathcal{C}(t)\;\forall t\in[0,t_f]\) and then prove prescribed-time reach-avoid (PT-RA) for the VCZ \(\mathcal{C}(t)\).
    
    \textit{Step 1: Confinement of $x(t)$ in $\mathcal{C}(t)$:} 
    Using error \eqref{eqn:error_def} and the true dynamics \eqref{eqn:sys_dyn}, we get error dynamics as 
\begin{equation}
\label{eqn:error_dynamics}
    \dot{e}=f(x)+g(x)u+\omega-\dot{c}.
\end{equation}
By Assumption~\ref{ass: assm1}, \(f(x)\), \(g(x)\), and $\omega$ are bounded. The virtual system satisfies \(\dot{c}=f_c(c)+g_c(c)u_c\), where \(f_c(c)\) and \(g_c(c)\) are locally Lipschitz. 
The CBF~\eqref{eqn:avoid_cbf} subject to the feasibility Assumption~\ref{ass:non_empty_u} ensures
\(c(t)\in \mathcal{R}_s(t)\;\forall t\in [0,t_f]\), making \(f_c\) and \(g_c\) bounded over the compact domain \(\mathcal{R}_s(t)\subset\mathbb{R}^n\;\forall t\in [0,t_f]\). By virtue of Lemma~\ref{lem:bounded_u}, the virtual control input $u_c$ is uniformly bounded and piecewise continuous in \(t\).

Thus \(\dot{c}\) is uniformly 
bounded over compact set \(\mathcal{R}_s(0)\) and piecewise continuous in \(t\), implying continuity of \(c(t)\) and therefore of \(e(t)\).
With the control law~\eqref{eqn:control},
\begin{equation}
\label{eqn:error_dynamics_final}
    \dot{e}=m(e,c)-p(e,c)\,\big(k\,\zeta(\hat{e})\,\phi(e)\big),
\end{equation}
where \(m(e,c):=f(x)+\omega-\dot{c}\) is bounded and piecewise continuous in \(t\) \(\forall t \in [0,t_f]\) 
and \(p(e,c)=g(x)\) is sign-definite and bounded (Assumption~\ref{ass: assm1}).

At \(t=0\) the dynamics \eqref{eqn:error_dynamics_final} is well defined as control \(u\) is defined for \(x(0)\in \mathcal{C}(0)\) and remains well posed within  \(\mathcal{B}^\circ(0_n,r_c)\) as \(e\) is continuous over \(\mathcal{B}^\circ(0_n,r_c)\) from Lemma \ref{lem:continuty_of_u}.
Now to prove that \(x(t)\in \mathcal{C}(t)\;\forall t \in [0,t_f]\Longleftrightarrow e(t)\in \mathcal{B}^\circ(0_n,r_c),\forall t \in [0,t_f]\)
we consider the Lyapunov candidate \(V(e):=\zeta^2(\hat{e})\) defined over domain \(\mathcal{B}^\circ(0_n,r_c)\), where \(\hat e = \|e\|/r_c\). 
 \(\zeta(\cdot)=\ln\big(\frac{1+\hat{e}}{1-\hat{e}}\big)\) is an extended class-\(\mathcal{K}_\infty\) function (strictly increasing, continuous, and \(\zeta(0)=0\)), 
\(V(e)\) is positive in \(e\). Differentiating along \eqref{eqn:error_dynamics_final} and applying standard comparison arguments, 
we obtain ultimate boundedness of \(e(t)\), i.e., the closed-loop system is uniformly ultimately bounded [ref. \cite{khalil2002nonlinear} Theorem 4.18 ] on \(\mathcal{B}^\circ(0_n,r_c)\), which implies
\begin{equation}
   e(t)\in \mathcal{B}^\circ(0_n,r_c) ,\forall t \in [0,t_f]\Longleftrightarrow x(t)\in \mathcal{C}(t)\;\forall t \in [0,t_f]. 
   \label{eqn:confinement}
\end{equation}

The proof for error dynamics \eqref{eqn:error_dynamics_final} to be Uniformly Ultimately Bounded (UUB) within \(\mathcal{B}^\circ(0_n,r_c)\) follows similarly to related constructions in the literature (e.g., \cite{demos_funnel}).

\textit{Step 2: Ensuring PT-RA for VCZ $\mathcal{C}(t)$:} 
The CBF constraints for obstacle avoidance enforce forward invariance over the inflated set such that
\(c(t)\in \mathbb{R}^n\setminus \mathcal{U}_{VCZ}(t)\), which implies $
\|c(t)-b_{u_j}(t)\| \ge r_{u_j}+r_c,\quad j\in\{1,2,\dots,d-1\}, \forall t\in[0,t_f]$, using Lemma \ref{lem:cbfqp} and by Assumption \ref{ass:non_empty_u}.
From \eqref{eqn:confinement}, we have \(\|x(t)-c(t)\|< r_c,\forall t\in [0,t_f]\) (i.e., \(x(t)\in \mathcal{C}(t)\)).
By the triangle inequality, we have 
$
\|x(t)-b_{u_j}(t)\| \ge(r_{u_j}+r_c)-r_c  = r_{u_j},\;\forall t\in [0,t_f]$, implying
\begin{equation}
\label{eqn:cons1}
    x(t)\in \mathbb{R}^n\setminus \mathcal{U}(t)\quad \forall t\in [0,t_f].
\end{equation}

At \(t=t_f\), \(\mathcal{R}_s(t_f)=\mathcal{B}(b_R,r_r(t_f))\); forward invariance from shrinking barrier implies $\|c(t_f)-b_R\| \le r_r(t_f) \le r_R-r_c$, where $r_R \geq r_c$.
Again by the triangle inequality and \(x(t_f)\in \mathcal{C}(t_f)\) \eqref{eqn:confinement}, one has
        $\|x(t_f)-b_R\| \le\|x(t_f)-c(t_f)\|+\|c(t_f)-b_R\|
    \le r_c+(r_R-r_c)=r_R$,
which implies 
\begin{equation}
\label{eqn:cons3}
    x(t_f)\in \mathcal{R}.
\end{equation}
From \eqref{eqn:cons1} and \eqref{eqn:cons3}, we conclude
\[
\big(x(t)\in \mathbb{R}^n\setminus\mathcal{U}(t),\ \forall t\in[0,t_f]\big)\ \wedge\ \big(x(t_f)\in \mathcal{R}\big).
\]
\end{pf}

\section{Simulation Results and Discussion}
We validate the proposed control strategy on a two-dimensional second-order nonlinear control-affine system with unknown drift and a time-varying unsafe set. The true system dynamics are given by \(\dot{x} = f(x) + g(x)u + \omega\), where \(x(t) = [x_1(t),\, x_2(t)]^\top \in \mathbb{R}^2\), \(u(t) = [u_1(t),\, u_2(t)]^\top \in \mathbb{R}^2\), \(f(x) = [5\sin(x_1 x_2),\, 5\cos(x_1 x_2)]^\top\), \(\omega = [0.4\cos(t),\, 0.4\sin(t)]^\top\), and \(g = \text{diag}(0.8,\, 0.5)\) with initial condition \(x(0) = [0,\, 0]^\top\). A static and a dynamic circular obstacle are defined as the sets \(\mathcal{U}_1 = \mathcal{B}^\circ(b_{u_1}, r_{u_1})\) and \(\mathcal{U}_2 = \mathcal{B}^\circ(b_{u_2}, r_{u_2})\), where \(b_{u_1} = [1.5,\, 2]^\top\), \(b_{u_2}(t) = [5 + 0.4t,\, 5 - 0.4t]^\top\), \(r_{u_1} = 0.5\), and \(r_{u_2} = 1.5\). The objective is to reach the target set \(\mathcal{R} = \mathcal{B}(b_R, 1.1)\) in \(t_f = 10~\text{s}\), where \(b_R = [10,\, 10]^\top\).

The virtual trajectory \(c(t) = [c_1(t), c_2(t)]^\top\) evolves under the single-integrator dynamics \(\dot{c}(t) = u_c(t)\) given in~\eqref{eqn:virtual_model}, where the virtual control \(u_c(t)\in\mathbb{R}^2\) is obtained from the CBF-QP~\eqref{eq:cbf_qp_general}. The safety constraints for the virtual state are encoded by the CBFs \(h_1(c,t) := \|c-b_{u_1}\|^2-(r_c+r_{u_1})^2\) and \(h_2(c,t) := \|c-b_{u_2}(t)\|^2-(r_c+r_{u_2})^2\) with \(r_c = 0.5\), in accordance with~\eqref{eqn:avoid_cbf}. To impose prescribed-time convergence, the target set is chosen as \(\mathcal{R}_{VCZ}=\mathcal{B}(b_R,0.5)\subset\mathcal{R}\) with center \(b_R=[10,10]^\top\), together with a shrinking set \(\mathcal{R}_s(t)=\mathcal{B}(b_R,r_r(t))\) and the reach CBF \(h_3(c,t) := r_r^2(t)-\|c-b_R\|^2\) defined in~\eqref{eqn:reach_cbf}. The radius \(r_r(t)\) decreases linearly as \(r_r(t)=\big(r_{r_{t_f}}-r_{r_{t_0}}\big)\frac{t}{t_f}+r_{r_{t_0}}\) from \(r_{r_{t_0}}=15\) to \(r_{r_{t_f}}=0.5\) over the interval \([0,t_f]\) with \(t_f=10\,\text{s}\). Given an initial condition \(x(0)\), the virtual state is initialized as \(c(0)=x(0)\) with \(\|x(0)-c(0)\|<r_c\), ensuring that the real system starts inside the VCZ. The CBF-QP ensures that \(c(t)\) remains within the safe sets and inside the shrinking region \(\mathcal{R}_s(t)\), while the actual control \(u(t)\) confines the true state \(x(t)\) around \(c(t)\). Consequently, when \(c(t)\) reaches \(\mathcal{R}_{VCZ}\) at \(t=t_f\), the true state also lies inside the desired target region \(\mathcal{R}\), thereby achieving prescribed-time reach-avoid.

\begin{figure}[ht]
    \centering
   {\includegraphics[width=0.49\columnwidth]{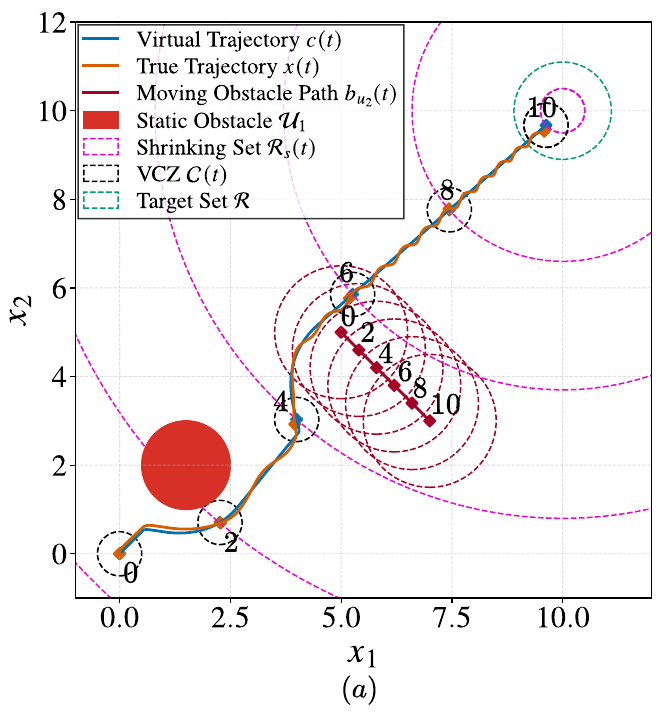}\label{fig:trajectory}}
    \hfill
  {\includegraphics[width=0.49\columnwidth]{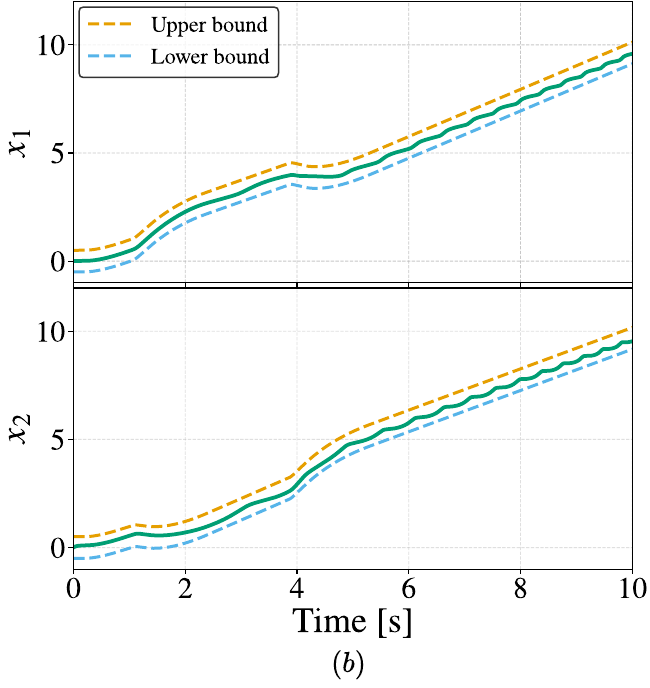}\label{fig:trajectory_bound}}
    \caption{(a) Trajectory plot in state space for PT-RA over \(10\)~s. 
Numbers next to each \colorsquare{mycolor} \colorsquare{mycolor1} \colorsquare{mycolor2} marker indicate time. (b) Time evolution of state trajectories within the virtual bounds ($c(t)\pm r_c)$.}
    \label{fig:ab}
\end{figure}

As illustrated in Fig.~\ref{fig:ab}(a) and (b), the system trajectory remains within the prescribed bounds over the entire time horizon \(t \in [0, 10]\). It consistently avoids both static and moving obstacles with a guaranteed safety margin of \(r_c = 0.5\), enforced by the control barrier function-based constraints. Moreover, the trajectory converges to the final target set \(\mathcal{R}\), demonstrating the effectiveness of the proposed approach under time-varying safety and performance constraints for an unknown system. The simulation video is available at https://youtu.be/UXx8Hr4aBio .

\section{Discussion and Comparison}

Table~\ref{tab:sttcbf_comparison} compares the proposed VCZ–CBF framework with representative safe control approaches. Classical HJ–reachability methods~\citep{waberisich2023,mitchell2005} provide formal safety guarantees but rely on accurate models, known disturbance bounds, and are, in general, computationally demanding and less suited for real-time control in dynamic environments. Predictive safety filters (PSFs) ~\citep{waberisich2023} are real-time but require nominal control and cannot work under a dynamic unsafe set unless the trajectory is known. Reinforcement learning approaches~\citep{saxena} can handle partially unknown systems but depend on extensive pretraining or precomputation and no formal guarantee. Similarly, Symbolic Control~\citep{tabuada2009verification} also requires extensive precomputation and is difficult to achieve PT-RA. Funnel control~\citep{4639441} ensures bounded transient response without explicit model knowledge, but cannot directly encode time-varying safety constraints or moving obstacles. Spatio-Temporal Tubes (STT)~\citep{dasstt} ensures PT-RA with an unknown model, but also cannot handle dynamic environments. {The model-free framework of \cite{molnar2021} leverages reduced-order kinematic models that naturally accommodate underactuated systems, whereas the proposed VCZ-CBF requires square, sign-definite \(g(x)\)(Assumption~\ref{ass: assm1}). However, the model-free variant in \cite{molnar2021} provides only ISSf guarantees, where the safety margin depends on uncompensated model-dependent terms that are generally unknown without the dynamical model.}

In contrast, the proposed VCZ--CBF framework is approxi-mation-free, ensures PT-RA, operating on a nominal virtual system while confining the true, unknown dynamics within a shrinking invariant region. This decoupling enables real-time implementation, prescribed-time reach-avoid guarantees, and robustness in dynamic environments, capabilities not jointly achieved by existing methods.

Despite these advantages, several practical limitations remain. First, as in many QP-based CBF methods, the virtual control $u_c$ can become large near intersecting constraints or rapidly moving obstacle boundaries, increasing control effort. Second, estimating obstacle velocities in real time is challenging under noisy or delayed perception, which may lead to conservative barrier constraints and potential control saturation. Finally, because the true state must remain within a fixed radius $r_c$ of the virtual trajectory, the tightened safe sets reduce the effective workspace, introducing spatial conservatism. Incorporating bounded-input feasibility and adaptive inflation strategies could mitigate these issues while preserving invariance and prescribed-time guarantees.

\begin{table}[t]
\centering
\caption{Comparison of proposed approach with classical algorithms}
\label{tab:sttcbf_comparison}
\renewcommand{\arraystretch}{1.05} 

\resizebox{\columnwidth}{!}{%
\begin{threeparttable}
\begin{tabular}{lcccccc}
\toprule
\textbf{Algorithm} & \textbf{Unknown} & \textbf{Precomp./} & \textbf{Formal} & \textbf{PT-RA} & \textbf{Dynamic}\\
                   & \textbf{Dynamics} & \textbf{Training Req.} & \textbf{Guarantee} & \textbf{} & \textbf{Environments} \\
\midrule

RL \citep{saxena}                & \cmark & \cmark & \xmark & \xmark & \xmark  \\
Symbolic Control \citep{tabuada2009verification}  & \xmark & \cmark & \cmark & \xmark & \xmark\\
PSFs \citep{waberisich2023}               & \xmark & \xmark & \cmark & \xmark & \xmark  \\
STT \citep{dasstt}               & \cmark & \xmark & \cmark & \cmark & \xmark  \\

HJ-Reachability \citep{mitchell2005}   & \xmark & \cmark & \cmark & \xmark & \xmark  \\
Funnel Control \citep{Berger2021}   & \cmark & \xmark & \cmark & \xmark & \xmark \\
{Model-free Safety} \citep{molnar2021} & \cmark$^1$ & \xmark & \cmark$^1$ & \xmark & \xmark \\
\textbf{VCZ-CBF (Proposed)} & \cmark & \xmark & \cmark & \cmark & \cmark  \\
\bottomrule
\end{tabular}
\begin{tablenotes}\footnotesize
\item[1] {Does not use the full dynamical model in the control law, 
but selecting gain requires knowledge of  the bounds of dynamic parameters
and the model-free variant guarantees only ISSf.}
\end{tablenotes}

\end{threeparttable}
}
\end{table}

\section{Conclusion}
We proposed a safe control framework for nonlinear control-affine systems with unknown dynamics using a \emph{Virtual Confinement Zone} (VCZ) scheme. A virtual control input was synthesized via a CBF-based Quadratic Program, ensuring prescribed-time reach–avoid guarantees without requiring explicit system identification or uncertainty bounds. Safety was enforced through Control Barrier Functions applied to a known virtual system, while the true system was confined around the virtual trajectory.

\bibliography{references_2}            

\appendix

\section{Proof of Lemma \ref{lem:bounded_u}}    %
\label{app:a}
\begin{pf}
    The obstacle avoidance CBF condition from \eqref{eqn:avoid_cbf} can be written as:
    \[
    a_j(c,t)^\top u_c \;\geq\; \rho_j(c,t), \qquad j=1,2,\dots,d-1,
    \]
    with $a_j(c,t)=2\,g_c(c)^\top\!\big(c-b_{u_j}(t)\big)$, $
    \rho_j(c,t):=-\gamma\!\big(h_j(c,t)\big)-2\big(c-b_{u_j}(t)\big)^\top f_c(c)+2\big(c-b_{u_j}(t)\big)^\top \dot b_{u_j}(t)$.
    
    From Assumption~\ref{ass:qp_feasibility}, the signals $\dot b_{u_j}(t)$, $\|c-b_{u_j}(t)\|$, $\|f_c(c)\|$, and $\|g_c(c)\|$ are bounded (and measurable) on the compact domain $\mathcal{R}_s(0)$ enforced by shrinking CBF \eqref{eqn:avoid_cbf}; hence $\|\rho_j(\cdot)\|$ is bounded for $j=1,\dots,d-1$.

    The CBF for target set reach \eqref{eqn:reach_cbf} can be expressed as 
    \[
    a_j(c,t)^\top u_c \;\geq\; \rho_j(c,t), \qquad j=d,
    \]
    with
    $a_j(c,t)=2\,g_c(c)^\top\!\big(b_R-c\big)$, $
    \rho_j(c,t):=-\gamma\!\big(h_j(c,t)\big)-2\big(b_R-c\big)^\top f_c(c)\;-\;2\,r_r(t)\,\dot r_r(t)$.
    By construction of the shrinking set $\mathcal{R}_s(t)$, the quantities $\|c-b_R\|$, $|r_r(t)|$, $|\dot r_r(t)|$, $\|f_c(c)\|$, and $\|g_c(c)\|$ are bounded (and measurable) on $\mathcal{R}_s(0)$; hence $\|\rho_j(\cdot)\|$ is bounded for $j=d$.

    Stacking them together, the constraint becomes
    \[
       A(c,t)\,u_c \;\geq\; \rho(c,t),
    \]
    where the rows of $A$ are the $a_j^\top$ and the entries of $\rho$ are the $\rho_j$ with \(\|\rho\|\) bounded. 

    From the feasibility Assumption~\ref{ass:non_empty_u} and strict convexity of the objective (constant $H\succ0$), bounded RHS \(\|\rho\|\) and non vanishing LHS \( A(c,t)\,u_c \neq0\) as \(\frac{\partial h}{\partial c}g_c\neq0\), the CBF–QP \eqref{eq:cbf_qp_general} admits a unique optimizer $u_c^\ast(c,t)$. Since $H\succ0$ the quadratic cost is coercive, and the affine feasible set $\{u_c: A(c,t)\ge \rho(c,t)\}$ is a nonempty closed polyhedron; by Berge’s maximum theorem and assumption \ref{ass:non_empty_u} the argmin set is nonempty and compact, hence $\|u_c^\ast\|$ is uniformly bounded on $\mathcal{R}_s(0)$. Finally, by assumption \ref{ass:non_empty_u} the solution is feasible \(\forall t\in [0,t_f]\) and the optimizer is unique, the mapping $t \mapsto u_c^\ast(c,t)$ is piecewise continuous in $t$.
\end{pf}
\section{Proof of Lemma \ref{lem:continuty_of_u}}  
\label{app:b}
\begin{pf} 
From \eqref{eqn:control} the auxiliary function \( \zeta(\hat{e}) \) is continuous on \( \mathcal{B}^\circ(0_n,r_c) \) 
with respect to \(e\). Furthermore, the function \( \phi(e)=\frac{e}{\|e\|} \) 
is continuous on \( \mathbb{R}^n \setminus \{0_n\} \), over \(e\) and \(k\) is constant based on {known sign of eigenvalues of} \(\frac{g(x)+g(x)^\top}{2}\).

We prove the continuity of the control input \(u\) in two parts: one  where \(\|e\|\neq0\), and another at \(\|e\|=0\) by applying limits.

\emph{Part 1.}  
For \(\|e\|\neq0\) , \(\zeta(\hat{e}) \) and \( \phi(e) \) are continuous in \( \mathcal{B}^\circ(0_n,r_c) \setminus \{0_n\} \) in \(e\) , the controller \(u\) is continuous in \( \mathcal{B}^\circ(0_n,r_c) \setminus \{0_n\} \) in \(e\).

\emph{Part 2.}  
To prove that \( u \) is continuous at \(\|e\| = 0\), it is necessary that
\(
u_{\|e(t)\|=0} = \lim_{\|e(t)\|\to 0} u.\)

When \(\|e(t)\|=0\), the control input is defined as \(u=0\). Thus
\(u_{\|e(t)\|=0}=0\). For \(\|e\|>0\), from \eqref{eqn:control},
\[
u = -k\,\zeta(\hat{e})\,\phi(e)
= -k\left(\ln\!\left(\tfrac{1+\hat{e}}{1-\hat{e}}\right)\right)\cdot\frac{e}{\|e\|}.
\]
For \(|\hat{e}|\leq 1\), expanding \(\zeta\) gives $\zeta \approx 2\hat{e} + \tfrac{2}{3}\hat{e}^3 + \dots$
so
\[
u \approx -2k\left(\tfrac{\|e\|}{r_c} + \tfrac{\|e\|^3}{3r_c^3}+\dots\right)\cdot\frac{e}{\|e\|}.
\]
The terms \(\|e\|\) cancel, yielding \(u\to 0_n\) as \(\|e\|\to 0\).

\medskip
Therefore, \(\lim_{\|e\|\to 0}u=0_n=u_{\|e\|=0}\). 
Hence, the control input \(u\) is continuous in \(e\) in \( \mathcal{B}^\circ(0_n,r_c) \).

\end{pf}

\end{document}